\begin{document}
\renewcommand{\thepage}{\arabic{page}}
\pagestyle{plain} 
\setcounter{page}{1}
\title{A Hybrid Fuzzy Regression Model for Optimal Loss Reserving in Insurance}
\titlerunning{Hamiltonian Mechanics}  
%
\author{WOUNDJIAGU\'{E} Apollinaire\inst{1,4} \and Mbele Bidima Martin Le Doux\inst{2}
\and Waweru Mwangi Ronald\inst{3}}
\authorrunning{Apollinaire WOUNDJIAGU\'{E} et al.} 
%
\tocauthor{WOUNDJIAGU\'{E} Apollinaire, Mbele Bidima Martin Le Doux and Waweru Mwangi Ronald}
\institute{Institute of Basic Sciences Technology and Innovation, Pan African University-Jomo Kenyatta University of Agriculture and Technology, P.O. Box 62000-00200, Nairobi-Kenya,\\
\email{appoappolinaire@yahoo.fr},\\
\texttt{woundjiague.apollinaire@students.jkuat.ac.ke}
\and
Faculty of Sciences, University of Yaounde I, P.O.Box 812 Yaoundé, Cameroon
\and 
School of Computing and Information Technology, Jomo Kenyatta University of Agriculture and Technology, P.O. Box 62000-00200, Nairobi-Kenya
\and National Advanced School of Engineering, University of Maroua, P.O.Box 46 Maroua, Cameroon}
\maketitle   
\begin{abstract}
 In this article, a Hybrid Fuzzy Regression Model with Asymmetric Triangular Fuzzy Coefficients and optimized $h-$value in Generalized Linear Models (GLM) framework have been developed. The weighted functions of Fuzzy Numbers rather than the Expected value of Fuzzy Number is used as a defuzzification procedure. We perform the new model on a numerical data (Taylor and Ashe, 1983) to predict incremental payments in loss reserving. We prove that the new Hybrid Model with the optimized $h-$value produce better results than the classical GLM according to the Reserve Prediction Error and Reserve Standard Deviation.

\keywords{IBNR, Hybrid Model, Asymmetric Triangular Fuzzy Number, Fuzzy Number, Defuzzification, Weighted functions of fuzzy numbers, Reserve Prediction Error, Reserve Standard Deviation, Crisp value, fair value of loss reserve.}
\end{abstract}
\section{Introduction}
An important role of a non-life actuary is the calculation of provisions, mainly IBNR reserve. Then, finding the fair value of loss reserve is a relevant topic for non-life actuaries. Indeed, insurance companies must simultaneously have enough reserves to meet their commitment to policyholders and have enough funds for their investments. Therefore several methods have been proposed in actuarial science literature to capture this fair value.\par
In one hand, we distinguish deterministic methods (Bornhuetter and Ferguson, 1972; Taylor,
1986; Linnemann, 1984). They provide crisp predictions for reserves. In the other hand, Taylor et al.
(2003); W\"{u}thrich and Merz (2008); Mack (1991); England and Verrall (2002) present stochastic methods. Those methods don't give only a crisp value of the reserves but provide also their variability. But even stochastic methods have weakness.\par
In Straub and Swiss (1988), there are some experiences where stochastic methods can give unrealistic estimates. For example, when the claims are related to body injures, the future losses for the company will depend on the growth of the wage index that help to determine the amount of indemnity, and depends also on changes in court practices and public awareness of liability matters. Then the information is vague. Therefore the use of Fuzzy Set Theory becomes very attractive when the information is vague as in this case. \par 
de Andr\'{e}s S\'{a}nchez (2006); de Andr\'{e}s-S\'{a}nchez (2007, 2012); de Andr\'{e}s S\'{a}nchez (2014) present the interest of fuzzy regression models (FRM) in the calculus of loss reserves in insurance using the concept of expected value of a Fuzzy Number (FN) (de Campos Ib\'{a}\~{n}ez and Mu\~{n}oz, 1989). Asai (1982) is the first to develop a FRM where the coefficient are fuzzy numbers (Dubois and Prade, 1988). In the case of loss reserving, FN are easy to handle arithmetically unlike in the case of classical regression where the coefficients are random variables and are not easy to handle arithmetically. Another difference between fuzzy regression and classical regression is in dealing with errors as fuzzy variables in fuzzy regression modelling while errors are considered as random residuals in classical regression. But to integrate both fuzziness and randomness into a regression model, one should think about hybrid regression models.\par Apaydin and Baser (2010); Baser and Apaydin (2010) proposed a hybrid fuzzy least-squares regression (HFLSR) (Chang, 2001; Apaydin and Baser, 2010; Baser and Apaydin, 2010) analysis in claim reserving framework using a weighted function of fuzzy number (Yager and Filev, 1999).\par
However, the FRM developed in de Andr\'{e}s S\'{a}nchez (2006); de Andr\'{e}s-S\'{a}nchez (2007, 2012);
de Andr\'{e}s S\'{a}nchez (2014) and the HFLSR (Chang, 2001; Apaydin and Baser, 2010; Baser and
Apaydin, 2010) as well don't select a proper value of $h$ and is of the greatest importance. The criteria for selecting an $h$ value are ad hoc (Moskowitz and Kim, 1993).\par
In this paper, we propose a hybrid model with asymmetric triangular fuzzy coefficients (ATFC) based on a GLM (log-Poisson regression) and optimized $h$ value in loss reserving framework using a weighted function of FN.\par
The structure of the paper is as follows. We present in the first section the preliminaries on fuzzy sets and their properties. In the second section, we shall do a review of some models and results on fuzzy regression. In the third section, the framework of estimation of loss reserve with log-Poisson regression will be introduce. We shall propose a new hybrid regression for loss reserving in the fourth section. Then we conclude the article.
\section{Preliminaries on Fuzzy Sets and their Properties}
In this section, we review some concepts related to our research. That is the concept of fuzzy set, membership function, fuzzy number, fuzzy regression and weighted function of FN.
\subsection{Review on some Definitions and Properties of Fuzzy Sets}
\begin{definition}(Zadeh, 1965)\\
Let $\Omega$ be a non empty set and $\omega \in \Omega$. In classical set theory, a subset $A$ of $\Omega$ can be defined by its characteristic function $\chi_{A}$ as a mapping from the elements of $\Omega$ to the elements of the set $\{0, 1\}$ ,
\begin{equation}
\chi_{A} : \Omega \longrightarrow \{ 0, 1\} 
\end{equation}
This mapping may be represented as a set of ordered pairs, with exactly one ordered pair present for each element of $\Omega$. The first element of the ordered pair is an element of the set $\Omega$, and the second element is an element of the set$\{0, 1\}$ . The value zero is used to represent non-membership, and the value one is used to represent membership. The truth or falsity of the statement "$\omega$ is in $A$" is determined by the ordered pair $(\omega, \chi_{A}(\omega) )$. The statement is true if the second element of the ordered pair is 1, and the statement is false if it is 0.\par
Similarly, a fuzzy subset (also called fuzzy set) $\tilde{A}$ of a set $\Omega$ can be defined as a set of ordered pairs, each with the first element from $\Omega$, and the second element from the interval $[0, 1]$, with exactly one ordered pair present for each element of $\Omega$. This defines a mapping called membership function. 
\end{definition}

\begin{definition} (Zadeh, 1965)\\
The membership function of a fuzzy set $\tilde{A}$, denoted by $\mu_{\tilde{A}}$ is defined by
\begin{equation}
\mu_{\tilde{A}} : \Omega \longrightarrow [0,1]
\end{equation}
where $\mu_{\tilde{A}}$ is typically interpreted as the membership degree of element $\omega$ in the fuzzy set $\tilde{A}$.\par
The degree to which the statement " $\omega$ is in $\tilde{A}$" is true is determined by finding the ordered pair $(\omega,\mu_{\tilde{A}}(\omega))$. The degree of truth of the statement is the second element of the ordered pair. A fuzzy set (Zadeh, 1965) $\tilde{A}$ on $\Omega$ can also be defined as a set of tuples:
\begin{equation}
\tilde{A} = \{(\omega,\mu_{\tilde{A}}(\omega)) \mid \omega \in \Omega \}.
\end{equation}
and could be represented by a graphic.
\end{definition}

\begin{definition}(Dubois and Prade, 1978)\\
Let $\Omega$ be the set of objects and $\tilde{A} \subset\Omega.$ The $\alpha-$cut $\tilde{A}_{\alpha}$ of $\tilde{A}$ is the set defined by
\begin{equation*}
\tilde{A}_{\alpha} = \{\omega \in \Omega, \mu_{\tilde{A}}(\omega) \geqslant \alpha \}.  
\end{equation*}	
\end{definition}

\begin{definition} (Dubois and Prade, 1988)\\
\begin{enumerate}
	\item 
A fuzzy number $\tilde{A}$ is a fuzzy set of a universe $\Omega$ (the real line $\mathbb{R}$) such that :
\begin{enumerate}
	\item[a.] 
 all its $\alpha-$cut are convex which is equivalent to $\tilde{A}$ is	convex, that is $\forall \omega_{1}, \omega_{2} \in \mathbb{R}$ and $\lambda \in [0,1], \quad \mu_{\tilde{A}}(\lambda \omega_{1} + (1-\lambda) \omega_{2}) \geqslant \min ( \mu_{\tilde{A}}(\omega_{1}),\mu_{\tilde{A}}(\omega_{2}) )$;
	\item[b.]
 $\tilde{A}$ is	normalized, that is $\exists \omega_{0} \in \Omega$ such that $\mu_{\tilde{A}}(\omega_{0})=1.$
	\item[c.]
 $\mu_{\tilde{A}}$ is continued membership function of bounded support, where $\Omega=\mathbb{R}$ and $[0,1]$ are equipped with the natural topology.	
\end{enumerate}
    \item
A triangular fuzzy number (TFN) $\tilde{\gamma}$ is a fuzzy number denoted by $\tilde{\gamma}=(\beta^{L}, \alpha^{c},\beta^{R}); \quad \beta^{L}, \alpha^{c},\beta^{R}\in \mathbb{R}$, such that $\mu_{\tilde{A}}(\beta^{L})=\mu_{\tilde{A}}(\beta^{R})=0$ and $\mu_{\tilde{A}}(\alpha^{c})=1$ with $\alpha^{c}$ the centre of $\tilde{\gamma}$, $\beta^{L}$ its left spread and $\beta^{R}$ its right spread (Lai and Hwang, 1992).\par
A TFN $\tilde{\gamma}$ could be defined with its membership degree function $\mu_{\tilde{\gamma}}$ or, with its $h-$level ($\alpha-$ cut  ($h\in [0,1]$) $\gamma_{h}$ (see Dubois and Prade (1988)), i.e
\begin{equation}\label{57}
\mu_{\tilde{\gamma}}(x)= \left\{ \begin{array}{cl}
1 - \dfrac{\alpha^{c} - x}{\beta^{L}} & \text{if}\quad \alpha - \beta^{L} < x \leqslant \alpha\\
1 - \dfrac{x - \alpha^{c}}{\beta^{R}} & \text{if}\quad \alpha < x \leqslant \alpha +  \beta^{R}\\
0 & \text{if} \quad \text{otherwise}
\end{array}\right. 
\end{equation}
or
\begin{equation}\label{58}
\tilde{\gamma}_{h}=[\gamma_{L_{h}} , \gamma_{R_{h}}]=\big [\alpha^{c} - \beta^{L}(1-h), \alpha^{c} + \beta^{R} (1-h)  \big ]
\end{equation}
\begin{itemize}
	\item 
	If $\alpha^{c} - \beta^{L} = \beta^{R} - \alpha^{c}$, then $\tilde{\gamma}$ define a STFN
	\item
	Otherwise $\beta^{L} \neq \beta^{R}$, then $\tilde{\gamma}$ define an ATFN (see Figure \ref{fig2}).\\ 
	\begin{figure}[h!]
		\begin{center}
			\caption{Asymmetric triangular fuzzy coefficients $\tilde{\gamma}_{j}=(\beta_{j}^{L},\alpha^{c}_{j},\beta_{j}^{R})$}\label{fig2}
			\bigskip
			\psset{xunit=3cm,yunit=3.25cm}
			\begin{pspicture}(0,0)(3,1)
			\psgrid[gridcolor=lightgray,subgriddiv=2,%
			subgridcolor=lightgray,gridlabels=2mm](0,0)(3,1)
			\psline[linewidth=2pt](0.5,0)(1.5,1)(2,0)
			\psline[linewidth=2pt,linestyle=dashed](0,1)(3,1)
			\psline[linewidth=2pt,linestyle=dashed](1.5,1)(1.5,0)
			\rput{90}(-0.08,0.5){\textbf{membership}}
			\rput{0}(1.9,0.5){$\tilde{\gamma}^{\prime}_{j}$}
			\psline[linewidth=0.75pt](0.5,0)(0.5,-0.2)
			\psline[linewidth=0.75pt](1.5,0)(1.5,-0.2)
			\psline[linewidth=0.75pt](2,0)(2,-0.2)
			\pcline[offset=-12pt]{<->}(0.5,-0.05)(1.5,-0.05)
			\pcline[offset=-12pt]{<->}(1.5,-0.05)(2,-0.05)
			\uput[-90](0.5,-0.2){$\beta_{j}^{L}$}
			\uput[45](1.5,-0.2){$\alpha^{c}_{j}$}
			\uput[-90](2,-0.2){$\beta_{j}^{R}$}
			\end{pspicture}
		\end{center}\end{figure}
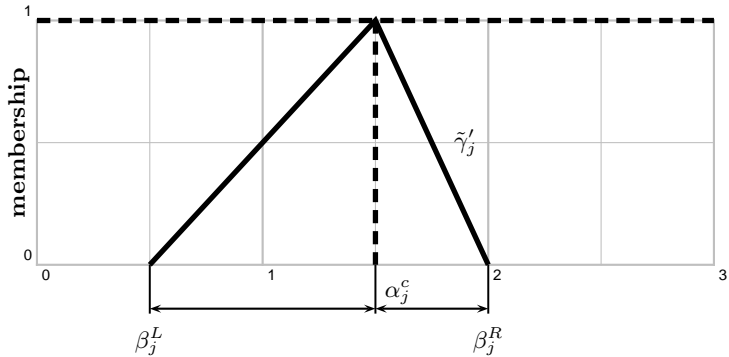
		\bigskip
\end{itemize}
\end{enumerate}
\end{definition}

\paragraph{Notes and Comments.}
It is well know that if $\tilde{A}$ is a fuzzy number, then $\tilde{A}_{h}$, the $h$ level ($\alpha-$cut) of $\tilde{A}$ is a compact set of $\mathbb{R}$, for all $h \in [0,1].$

\begin{property} (de Andr\'{e}s S\`{a}nchez, 2006)
Let $f(\tilde{\Gamma})=f(\tilde{\gamma}_{1},\tilde{\gamma}_{2}, \ldots , \tilde{\gamma}_{k})$ be a $k-$vector of TFN such that $\tilde{\gamma}_{i}=(\beta_{i}^{L}, \alpha^{c}_{i}, \beta_{i}^{R})  \in \mathbb{R}^3, i=1,\ldots , k$ are triangular fuzzy numbers (TFN). 
\begin{enumerate}
	\item[1.]
	If $f(\tilde{\Gamma})$ is obtained from a linear combination of the TFN $\tilde{\gamma}_{i}, i=1,\ldots , k$, then $f(\tilde{\Gamma})=(f^{L}(\Gamma), f^{c}(\Gamma), f^{R}(\Gamma))$ is also a TFN; where
	\begin{align}
	f^{L}(\Gamma) & =\sum\limits_{i=1\atop r_{i}\geqslant 0}^{k}\beta_{i}^{L}|r_{i}| + \sum\limits_{i=1\atop r_{i}< 0}^{k}\beta_{i}^{R}|r_{i}|\\
	f^{c}(\Gamma) & = \sum\limits_{i=1}^{k}\alpha^{c}_{i}r_{i}\\
	f^{R}(\Gamma) & = \sum\limits_{i=1\atop r_{i}\geqslant 0}^{k}\beta_{i}^{R}|r_{i}| + \sum\limits_{i=1\atop r_{i}< 0}^{N}\beta_{i}^{L}|r_{i}|
	\end{align}
	From the extension principle Zadeh (975b,c,d), we can obtain the $h-$ level of $f(\tilde{\Gamma})$, i.e
	\begin{equation}
	[f(\Gamma)]_{h} = f(\gamma_{1_{h}}, \gamma_{2_{h}}, \ldots , \gamma_{k_{h}})
	\end{equation}
	and 
	\begin{align}
	f(\tilde{\Gamma}) & =\sum\limits_{i=1}^{k}r_{i}\tilde{\gamma}_{i}\\
	& = \big (f^{L}(\Gamma), f^{c}(\Gamma), f^{R}(\Gamma) \big ), \quad \text{with} \quad r_{i} \in \mathbb{R}          
	\end{align}
	\item[2.]
	If $f(\tilde{\Gamma})$ is evaluated by non-linear functions with TFN, i.e $f(\cdot)$  is
	increasing with respect to the first $n$ variables, where $n \leqslant k$, and decreasing with respect to $k-n$ variables, the result will not be a TFN. But (Dubois and Prade, 1988) has shown that $f(\tilde{\Gamma})$ can be approximate with a TFN $f^{\prime}(\tilde{\Gamma})=\big (f^{L^{\prime}}(\Gamma), f^{c^{\prime}}(\Gamma), f^{R^{\prime}}(\Gamma)   \big )$. i.e
	\begin{align}
	f^{L^{\prime}}(\Gamma) & =\sum\limits_{i=1}^{n} \dfrac{\partial f^{c}(\Gamma)}{\partial \alpha^{c}_{i}}\beta_{i}^{L} - \sum\limits_{i=n+1}^{k} \dfrac{\partial f^{c}(\Gamma)}{\partial \alpha^{c}_{i}}\beta_{i}^{R}\\
	f^{c^{\prime}}(\Gamma) & = f(\alpha^{c}_{1},\alpha^{c}_{2}, \ldots , \alpha^{c}_{k})=f^{c}(\Gamma)\\
	f^{R^{\prime}}(\Gamma) & =\sum\limits_{i=1}^{n} \dfrac{\partial f^{c}(\Gamma)}{\partial \alpha^{c}_{i}}\beta_{i}^{R} - \sum\limits_{i=n+1}^{k} \dfrac{\partial f^{c}(\Gamma)}{\partial \alpha^{c}_{i}}\beta_{i}^{L}
	\end{align}
\end{enumerate}	
\end{property}
\begin{definition} (weighted function of FN)\\
	Let $\tilde{\gamma}=(\beta^{L}, \alpha^{c},\beta^{R}),\quad \beta^{L}, \alpha^{c},\beta^{R}\in \mathbb{R}$ be an asymmetric normal triangular FN and $g$ be a regular or nonregular function of $\tilde{\gamma}$ (Kauffman and Gupta, 1991). The weighted function of FN associated with the valuation method of Yager and Filev (1999) is defined as :
	\begin{equation}\label{56}
	g(\tilde{\gamma})=\dfrac{\frac{1}{2} \big [\int_{0}^{1}f(\gamma_{L_{h}})hdh  + \int_{0}^{1}f(\gamma_{R_{h}})hdh\big ]}{\int_{0}^{1}hdh}
	\end{equation}
\end{definition}\label{def1}
\section{Review of some Models and Results on Fuzzy Regression}
In this section we review the FRM proposed by Asai (1982) and the one proposed by Ishibuchi and
Nii (2001) and we present their properties. Those models will allow us to develop the new model in loss reserving. 
\subsection{Ishibuchi's FRM with asymmetric triangular fuzzy coefficients}\label{22}
Let us define the fuzzy linear regression (FLR) model proposed by Asai (1982).\par
Let $(Y_{N\times 1}, \mathbf{X}_{N\times (m+1)})$ the given crisp data and let
\begin{align}\label{8}
\tilde{Y}_{N\times 1} & =f(\mathbf{X}_{N\times (m+1)}) = \mathbf{X}_{N\times (m+1)} \otimes \tilde{\Gamma}_{(m+1)\times 1}\\ \nonumber
\begin{pmatrix}
\tilde{Y}_{1}\\
\vdots \\
\tilde{Y}_{N}
\end{pmatrix}
& = \begin{pmatrix}
\mathbf{x}_{1} \\
\vdots\\
\mathbf{x}_{N}
\end{pmatrix}
\otimes
\begin{pmatrix}
\tilde{\gamma}_{0}\\
\vdots\\
\tilde{\gamma}_{m}
\end{pmatrix}\\
& =\begin{pmatrix}
\mathbf{x}_{1}\\
\vdots \\
\mathbf{x}_{N}
\end{pmatrix}
\otimes
\begin{pmatrix}
(\alpha^{c}_{0}, \beta_{0})\\
\vdots \\
(\alpha^{c}_{m}, \beta_{m})
\end{pmatrix}\\
& =
\left(\begin{array}{c}
\big(\sum\limits_{j=0}^{m}\alpha^{c}_{j}x_{1j}, \sum\limits_{j=0}^{m}\beta_{j}x_{1j} \big)\\
\vdots\\
\big(\sum\limits_{j=0}^{m}\alpha^{c}_{j}x_{Nj}, \sum\limits_{j=0}^{m}\beta_{j}x_{Nj} \big)
\end{array} \right)
\end{align}  
be a FLR model with symmetric triangular fuzzy coefficients (STFC), where $\tilde{Y}$ is a fuzzy output from $f(\mathbf{X}_{N\times (m+1)})$ and $\mathbf{X}_{N\times (m+1)}$ is the matrix of the given crisp dataset, $\tilde{\Gamma}_{(m+1)\times 1} = \bigg[ (\alpha^{c}_{0}, \beta_{0}), (\alpha^{c}_{1}, \beta_{1}), \ldots , (\alpha^{c}_{m}, \beta_{m})\bigg]^{T}$ is the fuzzy parameter of the model.\par
In model \eqref{8}, $\tilde{\gamma}_{j} = (\alpha^{c}_{j}, \beta_{j})\in \mathbb{R}^2, j=0,\ldots,m$ are fuzzy coefficients such that $\alpha^{c}_{j}$ are centres of $\tilde{\gamma}_{j}$, and $\beta_{j}$ are its spread. The disturbance term is not introduced as a random addend in the linear relation, but incorporated into the coefficients $\tilde{\gamma}_{j}, j=0,\ldots , m$.\par
When the coefficients are symmetric triangular fuzzy number (STFN), the output $\tilde{Y}$ is also a STFN.\par
Let us denote
\begin{align}
f(\mathbf{X}_{i}) & =( f^{\alpha^{c}}(\mathbf{X}_{i}), f^{\beta}(\mathbf{X}_{i})),\quad i=1, \ldots , N\\
f(\mathbf{X}_{N\times (m+1)}) & =\bigg[f(\mathbf{X}_{1}),\ldots ,f(\mathbf{X}_{N})  \bigg]^{T}
\end{align}
where
\begin{align}
f^{\alpha^{c}}(\mathbf{X}_{i}) & =\alpha^{c}_{0} + \sum\limits_{j=1}^{m}\alpha^{c}_{j}\cdot x_{ij}\\
f^{\beta}(\mathbf{X}_{i}) & = \beta_{0} + \sum\limits_{j=1}^{m} \beta_{j}\cdot |x_{ij}|,\quad i=1, \ldots , N 
\end{align}
and $\mathbf{X}_{i}=(x_{i1}, x_{i2}, \ldots , x_{im})$.\\
Then $f(\mathbf{X}_{i})$ is a STFN. Its $h-$level, i.e $\alpha -$cut with $\alpha = h$ is calculated as follow for $h \in [0,1]$.
\begin{equation}\label{10}
[f(\mathbf{X}_{i})]_{h} = \big[f^{\alpha^{c}}(\mathbf{X}_{i}) - (1-h)\cdot f^{\beta}(\mathbf{X}_{i}), \quad f^{\alpha^{c}}(\mathbf{X}_{i}) + (1-h)\cdot f^{\beta}(\mathbf{X}_{i}) \big],\quad i=1, \ldots , N 
\end{equation} 
From \eqref{10}, $Y_{i} \in [f(\mathbf{X}_{i})]_{h}, \quad i=1,\ldots , N\Rightarrow $
\begin{equation}\label{11}
\left\{ \begin{array}{cl}
f^{\alpha^{c}}(\mathbf{X}_{i}) - (1-h)\cdot f^{\beta}(\mathbf{X}_{i})\leqslant Y_{i}\\
f^{\alpha^{c}}(\mathbf{X}_{i}) + (1-h)\cdot f^{\beta}(\mathbf{X}_{i})\geqslant Y_{i}, i=1,2,\ldots,N 
\end{array}\right.
\end{equation}
To determine the parameters $\tilde{\gamma}_{j}=(\alpha^{c}_{j}, \beta_{j})$ of the FRM \eqref{8}, Asai (1982) propose to solve a linear programming problem with an objective function of minimizing the total spread of the fuzzy coefficient, i.e
\begin{equation}\label{40}
\left\{ \begin{array}{cl}
\min: & z=\sum\limits_{i=1}^{N}f^{\beta}(\mathbf{X}_{i})=\sum\limits_{i=1}^{N}\big \{\beta_{0} + \beta_{1}\cdot |x_{i1}| + \ldots + \beta_{m}\cdot |x_{im}| \big\},\\
\text{subject to} & \left\{ \begin{array}{cl} f^{\alpha^{c}}(\mathbf{X}_{i}) - (1-h)\cdot f^{\beta}(\mathbf{X}_{i})\leqslant Y_{i}\\
f^{\alpha^{c}}(\mathbf{X}_{i}) + (1-h)\cdot f^{\beta}(\mathbf{X}_{i})\geqslant Y_{i},\\
\beta_{j}\geqslant 0 , i=1,\ldots , N; j=0,1,\ldots , m
\end{array}\right.
\end{array}\right. 
\end{equation}

Ishibuchi and Nii (2001) have shown that the fuzzy regression (FR) method developed by Asai (1982), when applied to different data sets can provide the same linear fuzzy model by solving the linear programming problem in \eqref{40} for $h=0.5$. Then they proposed the FRM with ATFC in order to remedy this limitation.\par
Let us assume now that
\begin{equation}\label{14}
\tilde{\gamma}_{j}=(\beta_{j}^{L}, \alpha^{c}_{j}, \beta_{j}^{R}), j=0,1,\ldots , m
\end{equation}
be ATFC in the fuzzy model \eqref{8}, where $\beta_{j}^{L}$ is its left spread, $\alpha^{c}_{j}$ its centre and $\beta_{j}^{R}$ its right spread (see Figure \ref{fig2}).\\ 
When $\forall j=0,1,\ldots , m, \quad \tilde{\gamma}_{j}$ are ATFC in model \eqref{8}, then $f(\mathbf{X}_{N\times (m+1)})$ is also calculated as an ATFC (Ishibuchi and Nii, 2001). Hence, we denote
	\begin{equation}
	f(\mathbf{X}_{N\times (m+1)})=\begin{pmatrix}
	f(\mathbf{X}_{1}) \\
	\vdots\\
	f(\mathbf{X}_{N})
	\end{pmatrix}
	=
	\begin{pmatrix}
	\big( f^{\beta^{L}}(\mathbf{X}_{1}), f^{\alpha^{c}}(\mathbf{X}_{1}), f^{\beta^{R}}(\mathbf{X}_{1})\big) \\
	\vdots\\
	\big( f^{\beta^{L}}(\mathbf{X}_{N}), f^{\alpha^{c}}(\mathbf{X}_{N}), f^{\beta^{R}}(\mathbf{X}_{N})\big)
	\end{pmatrix}
	\end{equation}   
	From Kaufmann and Gupta (1991), $f^{\beta^{L}}(\mathbf{X}_{i}),f^{\alpha^{c}}(\mathbf{X}_{i})$ and $f^{\beta^{R}}(\mathbf{X}_{i}), i=1,\ldots , N$ are compute as following
	\begin{align}\label{15}
	f^{\beta^{L}}(\mathbf{X}_{i}) & = \beta^{L}_{0} + \sum\limits_{j=1\atop x_{ij} \geqslant 0}^{m}\beta^{L}_{j}\cdot x_{ij} + \sum\limits_{j=1\atop x_{ij} < 0}^{m}\beta^{R}_{j}\cdot x_{ij}\\\label{151}
	f^{\alpha^{c}}(\mathbf{X}_{i}) & = \alpha^{c}_{0} + \sum\limits_{j=1}^{m}\alpha^{c}_{j}\cdot x_{ij}\\ \label{152}
	f^{\beta^{R}}(\mathbf{X}_{i}) & = \beta^{R}_{0} + \sum\limits_{j=1\atop x_{ij} \geqslant 0}^{m}\beta^{R}_{j}\cdot x_{ij} + \sum\limits_{j=1\atop x_{ij} < 0}^{m}\beta^{L}_{j}\cdot x_{ij}
	\end{align}
	and the $h-$level of $f(\mathbf{X}_{i})$ is as following
	\begin{equation}
	\big[f(\mathbf{X}_{i}) \big]_{h} = \bigg[h\cdot f^{\alpha^{c}}(\mathbf{X}_{i}) - (1-h)\cdot f^{\beta^{L}}(\mathbf{X}_{i}), \quad h\cdot f^{\alpha^{c}}(\mathbf{X}_{i}) + (1-h)\cdot f^{\beta^{R}}(\mathbf{X}_{i})\bigg] \quad i=1,\ldots , N
	\end{equation}
	The steps to determine the fuzzy coefficients in \eqref{14} are as following (Ishibuchi and Nii, 2001) :
	\begin{enumerate}
		\item[$\bullet$]
		Determine $f^{\alpha^{c}}(\mathbf{X}_{i}) \quad i=1,\ldots , N$ by Ordinary Least Square (OLS) regression,
		\item[$\bullet$]
		Determine $f^{\beta^{L}}(\mathbf{X}_{i}), f^{\beta^{R}}(\mathbf{X}_{i})$ of $f(\mathbf{X}_{i})$ for $i=1,\ldots , N$ by solving the linear programming problem :
		\begin{equation}\label{16}
		\left\{ \begin{array}{cl}
		\min: & z=\sum\limits_{i=1}^{N} \bigg \{f^{\beta^{R}}(\mathbf{X}_{i}) - f^{\beta^{L}}(\mathbf{X}_{i})\bigg\},\\
		\text{subject to} & \left\{ \begin{array}{cl} &  h\cdot f^{\alpha^{c}}(\mathbf{X}_{i}) - (1-h)\cdot f^{\beta^{L}}(\mathbf{X}_{i})\leqslant Y_{i}\\
		& h\cdot f^{\alpha^{c}}(\mathbf{X}_{i}) + (1-h)\cdot f^{\beta^{R}}(\mathbf{X}_{i})\geqslant Y_{i},\; i=1,2,\ldots,N\\
		& \beta^{L}_{j}\leq \alpha^{c}_{j} \leq \beta^{R}_{j}, j=0,1,\ldots , m
		\end{array}\right.
		\end{array}\right.
		\end{equation}
		\begin{equation}
		\eqref{15}-\eqref{151}-\eqref{152}-\eqref{16} \Rightarrow \left\{ \begin{array}{cl}
		\min : & z=\sum\limits_{i=1}^{N}\bigg \{\sum\limits_{j=0\atop x_{ij} \geqslant 0}^{m}(\beta_{j}^{R}-\beta_{j}^{L} )x_{ij} + \sum\limits_{j=1\atop x_{ij}< 0}^{m}(\beta_{j}^{L}-\beta_{j}^{R} )x_{ij}\bigg\}\\
		\text{s/t} & \left\{ \begin{array}{cl}
		& h\cdot\big(\alpha_{0}^{c} + \sum\limits_{j=1}^{m} \alpha_{j}^{c} x_{ij}\big) - (1-h)\big( \beta^{L}_{0} + \sum\limits_{j=1\atop x_{ij} \geqslant 0}^{m}\beta^{L}_{j}\cdot x_{ij} + \sum\limits_{j=1\atop x_{ij} < 0}^{m}\beta^{R}_{j}\cdot x_{ij} \big) \leqslant Y_{i}\\
		& h\cdot\big(\alpha_{0}^{c} + \sum\limits_{j=1}^{m} \alpha_{j}^{c} x_{ij}\big) + (1-h)\big(\beta^{R}_{0} + \sum\limits_{j=1\atop x_{ij} \geqslant 0}^{m}\beta^{R}_{j}\cdot x_{ij} + \sum\limits_{j=1\atop x_{ij} < 0}^{m}\beta^{L}_{j}\cdot x_{ij} \big) \geqslant Y_{i},\\ & i=1,\ldots , N\\
		& \beta_{j}^{L} \leqslant \alpha_{j}^{c}\leqslant \beta_{j}^{R}, \; j=0,1,\ldots, m
		\end{array}\right.
		\end{array}\right.
		\end{equation}\label{153}
	\end{enumerate}
\subsection{Optimizing the $h$ value for FLR with ATFC}
Let us present in this sub section the optimising $h$ value for ATFC developed by Chen et al. (2016) that we shall use later.
\subsubsection{Preliminaries and some Definitions}
Let us consider the model \eqref{8}, but by considering
\begin{align}\label{41}
\tilde{\Gamma}_{(m+1)\times 1} & = \bigg[\tilde{\gamma}_{0}, \tilde{\gamma}_{1}, \ldots , \tilde{\gamma}_{m} \bigg]^{T}\\
& = \bigg[(\beta_{0}^{L}, \alpha_{0}^{c}, \beta_{0}^{R}), (\beta_{1}^{L}, \alpha_{1}^{c}, \beta_{1}^{R}), \ldots , (\beta_{m}^{L}, \alpha_{m}^{c}, \beta_{m}^{R}) \bigg]^{T}.
\end{align}
That is a FLR with ATFC.\par
The system fuzziness in this case (model \eqref{8} with \eqref{41}) is defined by
\begin{align}\label{42}
\nabla & = \sum\limits_{i=1}^{N}\nabla \tilde{Y}_{i}\\
       & = \dfrac{1}{2}\sum\limits_{i=1}^{N} \big( f^{\beta^{R}}(\mathbf{x}_{i}) - f^{\beta^{L}}(\mathbf{x}_{i})\big)\\
       & = \dfrac{1}{2}\sum\limits_{i=1}^{N}\bigg(\sum\limits_{j=1\atop x_{ij}\geqslant 0}^{m}\big(\beta_{j}^{L} + \beta_{j}^{R}\big)x_{ij} -\sum\limits_{j=1\atop x_{ij}< 0}^{m}\big(\beta_{j}^{L} + \beta_{j}^{R}\big)x_{ij} + \big( \beta_{0}^{R} - \beta_{0}^{L}\big)\bigg)
\end{align}
Then the area where $Y_{i}$ is predicted is exactly the fuzziness $\nabla \tilde{Y}_{i}$. That why the objective function in \eqref{16} is to minimize the total fuzziness.\par
\begin{definition}(Chen et al., 2016)\\
The credibility of $\tilde{Y}_{i}$ in representing $Y_{i}$ denoted by $Cr_{i}$ is defined as
\begin{equation}\label{43}
Cr_{i} = \dfrac{\mu_{\tilde{Y}_{i}}(Y_{i})}{\nabla \tilde{Y}_{i}}.
\end{equation}
and the system credibility in model \eqref{8}-\eqref{41}, denoted by $Cr$ is calculated as follow :
\begin{align}
Cr & =\sum\limits_{i=1}^{N} Cr_{i}\\
& = \sum\limits_{i=1}^{N} \dfrac{\mu_{\tilde{Y}_{i}}(Y_{i})}{\nabla \tilde{Y}_{i}}.
\end{align}
The higher the $Cr_{i}$ (resp. $Cr$) is the better the performance of $\tilde{Y}_{i}$ (resp. FLR) will be.
\end{definition}
\begin{definition}(Chen et al., 2016)\\
\begin{enumerate}
	\item define $S_{i}^{h^l}$ by
\begin{equation}\label{46}
S_{i}^{h^l}=
\left\{ \begin{array}{cl}
\dfrac{\hat{Y}_{i}^{c}-Y_{i}}{(Y_{i}^{L})^{h^{l\ast}}}  = \dfrac{\sum\limits_{j=0}^{m} \hat{\alpha}_{j}^{c}x_{ij} - Y_{i}}{(\beta_{0}^{L})^{h^{l\ast}} + \sum\limits_{j=1\atop x_{ij}\geqslant 0}^{m} (\beta_{j}^{L})^{h^{l\ast}}x_{ij} + \sum\limits_{j=1\atop x_{ij}< 0}^{m} (\beta_{j}^{R})^{h^{l\ast}}x_{ij}} & \text{if}\quad Y_{i} \leqslant \hat{Y}_{i}^{c}\\
\dfrac{Y_{i} - \hat{Y}_{i}^{c}}{(Y_{i}^{R})^{h^{l\ast}}}  = \dfrac{Y_{i} - \sum\limits_{j=0}^{m} \hat{\alpha}_{j}^{c}x_{ij}}{(\beta_{0}^{R})^{h^{l\ast}} + \sum\limits_{j=1\atop x_{ij}\geqslant 0}^{m} (\beta_{j}^{R})^{h^{l\ast}}x_{ij} + \sum\limits_{j=1\atop x_{ij}< 0}^{m} (\beta_{j}^{L})^{h^{l\ast}}x_{ij}} & \text{if}\quad Y_{i} > \hat{Y}_{i}^{c}	\\
i=1,\ldots , N; \quad l=1,2
\end{array}\right.
\end{equation}
\item Define $Cr_{i}^{h^{l}}$ by
\begin{align}\label{47}
Cr_{i}^{h^{l}} & = \dfrac{(\mu_{\tilde{Y}_{i}}(Y_{i}) )^{h^{l\ast}}}{\nabla \tilde{Y}_{i}^{h^{l\ast}}}\\
& = \dfrac{2(1-S_{i}^{h^{l\ast}})}{\sum\limits_{j=0}^{m} \big((\beta_{j}^{L})^{h^{l\ast}} + (\beta_{j}^{R})^{h^{l\ast}} \big) |x_{ij}|}, \quad i=1, \ldots , N; \quad l=1,2
\end{align} 
\item Define $p_{i}^{h^1}$ by
\begin{equation}\label{52}
p_{i}^{h^l} = \dfrac{S_{i}^{h^l}}{\nabla \tilde{Y}_{i}^{h^{l\ast}}}, \quad i=1,\ldots , N; \quad l=1,2.
\end{equation}
\item Define $p^{h^l}$ by
\begin{equation*}
p^{h^l}= \sum\limits_{i=1}^{N}p_{i}^{h^l}, \quad l=1,2.
\end{equation*}
\end{enumerate}
\end{definition}
\subsubsection{Optimizing the $h$ value}
\begin{theorem}
Consider $h^1 = 0$ and $h^2 = h$. Then, the optimal value of $h$ is given by
	\begin{equation}\label{55}
	h^{\ast}=
	\left\{ \begin{array}{cl}
	\dfrac{1}{2}\big( 1 - \dfrac{Cr^{0}}{p^{0}}\big) & \text{if} \quad 0 \leqslant \dfrac{Cr^{0}}{p^{0}} \leqslant 1\\
	0 & \text{if} \quad \dfrac{Cr^{0}}{p^{0}}> 1
	\end{array}\right.
	\end{equation}
\end{theorem}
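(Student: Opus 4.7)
My plan is to prove the theorem by expressing the system credibility $Cr^{h}$ as an explicit function of $h$, then maximizing that function on $[0,1]$ and reading off the stated formula.

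First, I would derive how the LP-optimal spreads scale with $h$. At $h^{1}=0$, the constraints in \eqref{16} read $f^{\alpha^{c}}(\mathbf{X}_{i}) - f^{\beta^{L}}(\mathbf{X}_{i})\leqslant Y_{i}$ and $f^{\alpha^{c}}(\mathbf{X}_{i}) + f^{\beta^{R}}(\mathbf{X}_{i})\geqslant Y_{i}$, and for each index $i$ exactly one of these constraints is binding depending on the sign of $\hat{Y}_{i}^{c}-Y_{i}$ (since $\hat{Y}_{i}^{c}=f^{\alpha^{c}}(\mathbf{X}_{i})$ is fixed by OLS). At a general $h\in[0,1]$, the binding constraint becomes $(1-h)f^{\beta^{L/R}}(\mathbf{X}_{i})=|\hat{Y}_{i}^{c}-Y_{i}|+h|\hat{Y}_{i}^{c}-Y_{i}|-h|\hat{Y}_{i}^{c}-Y_{i}|$ after a short manipulation, which when combined with the LP shows that the optimal spread profile satisfies $(\beta_{j}^{L/R})^{h^{\ast}}=\tfrac{1}{1-h}(\beta_{j}^{L/R})^{0\ast}$ up to the same correction for every active index, so that $\nabla\tilde{Y}_{i}^{h^{\ast}}=\tfrac{1}{1-h}\nabla\tilde{Y}_{i}^{0\ast}$. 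This is the scaling identity that makes every subsequent calculation collapse.

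Next, using \eqref{46}, this scaling yields $S_{i}^{h^{\ast}}=(1-h)\,S_{i}^{0}$, and then from \eqref{47} the individual credibilities transform by $Cr_{i}^{h^{\ast}}=(1-h)\bigl(Cr_{i}^{0}-2h\,p_{i}^{0}/(1-h)\bigr)$ once one substitutes the definition $p_{i}^{h^{l}}=S_{i}^{h^{l}}/\nabla\tilde{Y}_{i}^{h^{l\ast}}$ in \eqref{52}. Summing over $i=1,\dots,N$ gives
\begin{equation*}
Cr^{h}=(1-h)\,Cr^{0}-2h\,p^{0}+O_{h}\text{(boundary terms)},
\end{equation*}
which is a concave (in fact affine-quadratic) function of $h$ whose unique interior maximum is found by differentiating with respect to $h$ and setting the derivative to zero. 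The first-order condition reduces to the linear equation $Cr^{0}+2p^{0}(1-2h)=0$, giving the candidate critical point $h^{\ast}=\tfrac{1}{2}\bigl(1-Cr^{0}/p^{0}\bigr)$.

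Finally I would handle feasibility. Since $h\in[0,1]$, the critical point is admissible exactly when $0\leqslant Cr^{0}/p^{0}\leqslant 1$; this yields the first branch of \eqref{55}. When $Cr^{0}/p^{0}>1$ the unconstrained maximizer falls below zero, so by monotonicity of $Cr^{h}$ on $[0,1]$ in that regime the maximum over the admissible range is attained at the endpoint $h^{\ast}=0$, which is the second branch. The boundary case $Cr^{0}/p^{0}<0$ does not arise because both $Cr^{0}$ and $p^{0}$ are non-negative by construction of $S_{i}$ and $\nabla\tilde{Y}_{i}$.

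The main obstacle I expect is the very first step, namely rigorously justifying the uniform scaling $(\beta_{j}^{L/R})^{h^{\ast}}=(\beta_{j}^{L/R})^{0\ast}/(1-h)$ from the LP \eqref{16}. Because the LP couples all indices through shared spread variables $\beta_{j}^{L},\beta_{j}^{R}$, the per-index binding analysis only gives inequalities directly; one must argue via LP duality or an exchange argument that the optimum basis is preserved under the reparameterization $h\mapsto h$, so that the whole spread vector scales by the same factor. Once that structural fact is in hand, the remaining algebra to reach \eqref{55} is a short computation with the definitions \eqref{43}, \eqref{46}, \eqref{47}, \eqref{52}.
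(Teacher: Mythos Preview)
The paper does not actually prove this theorem; its ``proof'' is a one-line reference to Chen et al.\ (2016). So there is no in-paper argument to compare your proposal against, and your overall strategy---establish the scaling of the LP-optimal spreads in $h$, plug it into the definitions of $S_i$, $Cr_i$, $p_i$, and then optimize the resulting system credibility in $h$---is in fact the route taken in that reference.

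That said, your algebra in the middle is wrong and does not reproduce the formula you claim. Under the scaling $(\beta_j^{L/R})^{h\ast}=(\beta_j^{L/R})^{0\ast}/(1-h)$ you have $S_i^{h}=(1-h)S_i^{0}$ and $\nabla\tilde Y_i^{h\ast}=\nabla\tilde Y_i^{0\ast}/(1-h)$; substituting into \eqref{47} gives
\[
Cr_i^{h}=\frac{2\bigl(1-(1-h)S_i^{0}\bigr)}{\tfrac{1}{1-h}\sum_j(\beta_j^{L,0\ast}+\beta_j^{R,0\ast})|x_{ij}|}
=(1-h)\,Cr_i^{0}+h(1-h)\,p_i^{0},
\]
so that $Cr^{h}=(1-h)\,Cr^{0}+h(1-h)\,p^{0}$, not $(1-h)Cr^{0}-2hp^{0}$ as you wrote. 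Your displayed expression is affine in $h$, has derivative $-Cr^{0}-2p^{0}<0$, and therefore admits no interior maximizer at all; the ``first-order condition'' $Cr^{0}+2p^{0}(1-2h)=0$ you then state neither follows from it nor solves to $h^{\ast}=\tfrac12(1-Cr^{0}/p^{0})$ (it gives $h=\tfrac12+Cr^{0}/(4p^{0})$). With the correct quadratic expression, the derivative $-Cr^{0}+(1-2h)p^{0}$ vanishes precisely at $h^{\ast}=\tfrac12(1-Cr^{0}/p^{0})$, and the concavity in $h$ together with the endpoint analysis then yields both branches of \eqref{55} exactly as you outline. So the plan is right, but the computation connecting the scaling lemma to the optimizer needs to be redone; once corrected, the remaining feasibility discussion you give is fine.
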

\begin{proof}
(see Chen et al. (2016)).
\end{proof}
\section{Estimation of Loss Reserve with Log-Poisson Regression}
In this section, we are interested in the estimation of loss reserve using a Generalized Linear Model (GLM), especially the log-Poisson regression. We consider a non-life insurance company which sells policies in a period of time (year). This year is referred to as underwriting year. The claims regarding an underwriting year will not necessarily all be paid within this year. Due to legal issues, general consideration of the claim, the delay from the claim's occurrence time to the reporting time, \ldots, some claims are reported and paid in the following years.\par
At some point in time there will however be no more payments regarding underwriting year one; the year one is said to has \textit{run off}. We use then the historical data of claims presented as a \textit{run-off triangle} (\ref{tab1}), where $Y_{ij}$ is the total loss regarding the underwriting period $i$ which have been paid with $j$ periods delay. The loss amounts $Y_{ij}$ with $i+j=k$ have been paid in calendar year $k\in \mathbb{N}$. At period $s\in \mathbb{R}$, we have observed the payments
\begin{equation}\label{59}
Y_{ij}, \quad (i,j)\in \mathcal{T}_{s}
\end{equation}
where
\begin{equation}\label{60}
\mathcal{T}_{s}=\{(i,j)\in \mathbb{N}^{\ast}\times \mathbb{N}: 1 \leqslant i+j \leqslant s \}
\end{equation}
Table \ref{tab1} is usually called \textit{run-off triangle} for example in $s_{0}\in \mathbb{N}$ periods because $\forall j> s_{0}$, $Y_{ij}=0$. And then the reserve $R_{i}$ for the underwriting period $i$ is defined as the predictor of the not yet observed amount $Y_{i,k-1} + \ldots , Y_{i, s_{0}}$. The total amount reserve $R$ is defined as the prediction of 
\begin{equation}\label{61}
\sum\limits_{i,j}Y_{i,j}, \quad (i,j)\in \mathcal{Q}_{k}
\end{equation}  
where
\begin{equation}\label{62}
\mathcal{Q}_{k} = \{(i,j) \in \{1,\ldots, k\}\times \{0,1,\ldots,k\}: \quad i+j \geqslant k+1 \}.
\end{equation}
\begin{table}[!h]
	\centering
	\begin{tabular}{lclclclclclc}
		\toprule
		\multicolumn{8}{c}{\bf Development Year}\\ 
		& & \bf 0 & \bf 1 & \bf $\ldots$ & $l$ & $\ldots$ & $k-i$ & $\ldots$ & $k-1$ & $k$ \\ \hline
		\multirow{9}{6mm}{\rotatebox{90}{\bf Accident Year}} & \bf 0  & $Y_{0,0}$ & $Y_{0,1}$ & $\ldots$ & $Y_{0,l}$ & $\ldots$ & $Y_{0,k-i}$ & $\ldots$ & $Y_{0,k-1}$ & $Y_{0,k}$\\
		& \bf 1   & $Y_{1,0}$ & $Y_{1,1}$ & $\ldots$ & $Y_{1,l}$ & $\ldots$ & $Y_{1,k-i}$ & $\ldots$ & $Y_{1,k-1}$ \\
		& $\vdots$  & $\vdots$ & $\vdots$ & $\ldots$ & $\vdots$ & $\ldots$ & $\vdots$ \\
		& $i$   & $Y_{i,0}$ & $Y_{i,1}$ & $\ldots$ & $Y_{i,l}$ & $\ldots$ & $Y_{i,k-i}$\\
		& $\vdots$  & $\vdots$ & $\vdots$ & $\ldots$ & $\vdots$ \\
		& $k-l$  & $Y_{k-l,0}$ & $Y_{k-l,1}$ & $\ldots$ & $Y_{i,k-i}$\\
		& $\vdots$  & $\vdots$ & $\vdots$\\
		& $k-1$  & $Y_{k-1,0}$ & $Y_{k-1,1}$\\ 
		& $k$  & $Y_{k,0}$\\ \hline
	\end{tabular}
	\caption{Triangle of observed incremental payments}\label{tab1}
\end{table}
Here we assume that $Y_{ij}$ follow a Poisson distribution with the underwriting period $i$ which is reported with $j$ period delay.\\
Assume that $Y_{ij}$, $(i,j)\in \mathcal{T}_{k} \cup \mathcal{Q}_{k}$ are mutually independent (Renshaw and Verrall, 1998) and 
\begin{align}\label{63}
& Y_{ij} \sim \mathcal{P}(e^{\nu_{ij}}),\quad \text{with} \quad \mathbb{V}(Y_{ij})=\psi \mathbb{E}(Y_{ij})\\
& \text{i.e}\; \ln \mathbb{E}(Y_{ij}) = \nu_{ij}\nonumber
\end{align}
where
\begin{equation}\label{64}
\nu_{ij}= \tau + \alpha_{i} + \beta_{j}, \quad (i,j) \in \mathcal{T}_{k} \cup \mathcal{Q}_{k}, \quad \alpha_{1}=\beta_{0}=0,
\end{equation}
$\psi$ is the dispersion parameter, $\tau$ means that we assume the portfolio to grow, or shrink, by a fixed percentage each year, $\alpha_{i}$ means that the proportion settled decreases by a fixed fraction with each origin year and $\beta_{j}$ means that the proportion settled decreases by a fixed fraction with each development year.\\
The parameter vector is given by
\begin{equation}\label{65}
\theta_{k}=(\alpha_{1},\ldots,\alpha_{k}, \beta_{0}, \ldots, \beta_{k-1}, \tau )\in \mathbb{R}^{2k-1}
\end{equation}
The estimator of $Y_{ij}$ can be given using the following relationship
\begin{equation}
\hat{Y}_{ij}=e^{ \hat{ \tau} + \hat{\alpha}_{i} + \hat{\beta}_{j}}
\end{equation}
where $\hat{\tau}, \hat{\alpha}_{i}, \hat{\beta}_{j}$ are the Maximum Likelihood Estimators (MLE) of $\tau, \alpha_{i}, \beta_{j}$ respectively and could be derive from a recursive algorithm (see (Mack, 1991)).\par
Let us consider the numerical example in Table \ref{tab2} (Taylor and Ashe, 1983). This dataset is used in many texts on loss reserving problems.\\
\begin{table}[!h]
	\centering
	\begin{tabular}{|c|c|cccccccccc|}
	\cline{3-12}
	\multicolumn{2}{c|}{} &	\multicolumn{10}{|c|}{\centering Development Year}\\ \hline
	\multicolumn{2}{|c|}{$i/j$} &  1 &  2 &  3 &  4 &  5 &  6 & 7 & 8 & 9 & 10  \\ \hline
		\multirow{9}{6mm}{\rotatebox{90}{\bf Origin Year}} & 1  & 357848 & 766940 & 610542 & 482940 & 527326 & 574398 & 146342 & 139950 & 227229 & 67948 \\
		&  2  & 352118 & 884021 & 933894 & 1183289 & 445745 & 320996 & 527804 & 266172 & 425046 &\\
		&  3  & 290507 & 1001799 & 926219 & 1016654 & 750816 & 146923 & 495992 & 280405 &  &  \\
		&  4  & 310608 & 1108250 & 776189 & 1562400 & 272482 & 352053 & 206286 & & & \\
		&  5  & 443160 & 693190 & 991983 & 769488 & 504851 & 470639 & & & & \\
		&  6  & 396132 & 937085 & 847498 & 805037 & 705960 & & & & & \\
		&  7  & 440832 & 847631 & 1131398 & 1063269 & & & & & &\\
		&  8  & 359480 & 1061648 & 1443370 &  & & & & & &\\ 
		&  9  & 376686 & 986608 &  &  & & & & & &\\ 
		&  10 & 344014 &  &  &  & & & & & &\\ \hline
	\end{tabular}
	\caption{Numerical example from Taylor and Ashe (1983)}\label{tab2}
\end{table}
After performed the log-Poisson regression, we got the following Maximum Likelihood Estimations of the parameters of the model and their confidence interval in Table \ref{tab3}.\\
\begin{table}[!h]
	\centering
	\begin{tabular}{|c|c|c|c|c|}
		\hline
		& $(\hat{\alpha}_{i})_{2 \leqslant i \leqslant 10}$ & $(\hat{\beta}_{j})_{2 \leqslant j \leqslant 10}$  & $CI_{99\%}(\alpha_{i}),\; 2 \leqslant i \leqslant 10$ & $ CI_{99\%}(\beta_{j}), \; 2 \leqslant j \leqslant 10$  \\ \hline \hline
		& 0.331  & 0.912 & [ 0.329, 0.333] & [0.910,0.914] \\
		& 0.321 & 0.958 & [0.318,0.323] & [0.956,0.961]  \\
		& 0.305 & 1.025 & [0.303,0.308]  & [1.023,1.028]\\
		& 0.219  & 0.435 & [0.216,0.221]  & [0.432, 0.437]  \\
		& 0.270  & 0.080 & [0.267,0.272]  & [0.077,0.083]\\
		& 0.372  &-0.006 & [0.369,0.375]  & [-0.0098,-0.0029]\\ 
		& 0.553  &-0.394 & [0.550,0.556]  & [-0.398,-0.390]\\ 
		& 0.368  & 0.009 & [0.365,0.372]  & [0.0048,0.0139]\\ 
		& 0.242  &-1.379 & [0.236,0.248]  & [-1.393,-1.367]\\ \hline
		& \multicolumn{2}{c|}{ $\hat{\tau}=12.506 $} & \multicolumn{2}{c|}{ $CI_{99\%}(\tau)=[12.503,12.508] $}\\ \hline
	\end{tabular}
	\caption{MLE of parameters}\label{tab3}
\end{table}
With a threshold of $1\%$, we conclude that all the coefficients of the regression are significant. As an example of interpretation of the results, the estimation of the payment of $2^{nd}$ origin year could be $e^{0.331 + 0.912 + 12.506} = 936779.4925$.\par
Now let us test if the model performed is adapted to a statistical perspective. Recall that we assumed equi-dispersion in the dataset, i.e $\psi = 1.$ \\
The dispersion test is displayed in Table \ref{tab4}.\\
\begin{table}[!h]
	\centering
	\begin{tabular}{|c|c|}
		\hline
		& Overdispersion test  \\ \hline \hline
		& $H_{0} : \; \psi = 1$ \\ 
		& $H_{1} : \; \psi > 1$ \\ 
		& $Z=4.3942$                  \\ 
		& $p-value = 5.558 \times 10^{-6}$    \\ \hline
	\end{tabular}
	\caption{Overdispersion test}\label{tab4}
\end{table}
With a threshold of $1\%$, we reject the null hypothesis, i.e a quasi-Poisson model, with the variance proportional to the mean, should be more reasonable.\\
After performing a quasi-Poisson regression, we get the same estimate of parameters as in table \ref{tab3}, but the overdispersion parameter is taken to be $\hat{\psi}=52601.93$.\\
Hence, we can predict the incremental claims payments as $\hat{Y}_{ij}$ (see \ref{tab5}).\\
\begin{table}[!h]
	\centering
	\begin{tabular}{|c|c|c|c|c|c|c|c|c|c|c|c}
		\cline{3-12}
	\multicolumn{2}{c|}{} &	\multicolumn{10}{|c|}{\centering Development Year}\\ \hline
		\multicolumn{2}{|c|}{$i/j$} &  1 &  2 &  3 &  4 &  5 &  6 & 7 & 8 & 9 & 10  \\ \hline
		\multirow{12}{6mm}{\rotatebox{90}{ Origin Year}} & 1  & 270061.4  & 672616.7 & 704494.1 & 753437.8 & 417350.2  & 292570.6 & 268343.5 & 182034.7 & 272606.0 & 67948.00  \\
		 & 2  & 376125.0 & 936779.4 &  981176.3 & 1049342.0 & 581259.8 & 407474.4 & 373732.4 & 253526.8 & 379669.0  & 94633.81  \\ 
		 & 3 & 372325.3 & 927315.9 & 971264.3 & 1038741.3 & 575387.7 & 403358.0 & 369956.9 & 250965.6  & 375833.5 & 93677.80  \\ 
		& 4 & 366724.0 & 913365.1 & 956652.3 & 1023114.2 & 566731.5 & 397289.8 & 364391.2 & 247190.0   & 370179.3 & 92268.49  \\ 
		 & 5 & 336287.3 & 837559.2 & 877253.8 & 938199.6 & 519694.9 & 364316.2 & 334148.1 & 226674.1 & 339455.9 & 84610.55  \\ 
		 & 6 & 353798.1 & 881171.9 & 922933.4 & 987052.7 & 546756.0 & 383286.6 & 351547.5 & 238477.3 & 357131.7 & 89016.32  \\ 
		 & 7 & 391841.7 & 975923.4 & 1022175.5 & 1093189.5 & 605548.1 & 424501.0 & 389349.1 & 264120.5 & 395533.7 & 98588.16  \\ 
		& 8 & 469647.5 & 1169707.2 & 1225143.3 & 1310258.2 & 725788.5 & 508791.9 & 466660.0 & 316565.5 & 474072.7 & 118164.27 \\ 
		& 9 & 390560.8 & 972733.2 & 1018834.1 & 1089616.0 & 603568.6 & 423113.4 & 388076.4 & 263257.2 & 394240.8 & 98265.88  \\ 
		& 10 & 344014.0 & 856803.5 & 897410.1 & 959756.3 & 531635.7 & 372687.0 & 341825.7 & 231882.4 & 347255.4 & 86554.62 \\ \hline
	\end{tabular}
	\caption{Prediction of $Y_{ij}$ from table \ref{tab2}}\label{tab5}
\end{table}
Then, the total amount of reserves is the sum of $\hat{Y}_{ij}$ for calendar year beyond 10, i.e
\begin{equation}
\hat{R}= \sum\limits_{i+j> 10}\hat{Y}_{ij}=18,680,856.
\end{equation}
To compute the standard error, mean squared error and the Reserve Prediction Error (Kaas et al.,
2008), Bootstrapping based on residuals of quasi-Poisson regression is needed (see \ref{tab6}). Recall that \\
\begin{table}[!h]
	\centering
	\begin{tabular}{|c|c|}
	\hline
	Reserve Prediction Error ($EP_{\hat{R}}$) & 2,882,413 \\ \hline
	Reserve Standard Deviation ($\sigma_{\hat{R}}$)	& 2,706,597 \\ \hline
	Mean squared error ($MSE_{\hat{R}}$) & 2,945,661 \\ \hline
	\end{tabular}
	\caption{Variability of $\hat{R}$}\label{tab6}
\end{table}\\
\section{A Hybrid Log-Poisson Regression for Loss Reserving}
In this section, we present a new model, which is the extension of the classical log-Poisson Regression developed by Mack (1991) in loss reserving.\par
Mack (1991) assumes that incremental payments $Y_{ij}$ are Poisson distributed, i.e,
\begin{equation}
Y_{ij} \sim \mathcal{P}(e^{\nu_{ij}}) \Rightarrow \mathbb{E}(Y_{ij})=e^{\nu_{ij}}\; \forall (i,j)\in \mathcal{T}_{k}.
\end{equation} 
Denote
\begin{equation}
\mathbb{E}(Y_{ij})=\lambda_{ij},\; \forall (i,j)\in \mathcal{T}_{k}.
\end{equation}
So the model becomes
\begin{equation}\label{73}
\ln (\lambda_{ij})=\tau + \alpha_{i} + \beta_{j},\;  \forall (i,j)\in \mathcal{T}_{k}.
\end{equation}
In the model, we consider a hybrid log-Poisson regression with minimum fuzziness and Maximum Likelihood (ML) criterion.\par
We assume that uncertainty about incremental payments in the run-off triangle is due both to fuzziness and randomness. Then the estimate $\hat{\lambda}_{ij}$ of $\lambda_{ij}$ will be obtain by the use of weighted function of the FN $\tilde{\lambda}^{Est}_{ij}$; where $\tilde{\lambda}^{Est}_{ij}$ denote the estimate of the FN $\tilde{\lambda}_{ij}.$\par
We suppose that $\tilde{Y}_{ij}$ is a fuzzy Poisson random variable (Buckley, 2006), i.e,
\begin{equation}
\tilde{Y}_{ij} \sim \tilde{\mathcal{P}}(\lambda_{ij}), \; \forall (i,j) \in \mathcal{T}_{k}.
\end{equation}  
According to (Buckley, 2006), the fuzzy expected value of $\tilde{Y}_{ij}$ is defined by its $h-$level, i.e,
\begin{align*}
\big[\mathbb{E}_{F}(\tilde{Y}_{ij})\big]_{h} & = \{\sum\limits_{x=0}^{+\infty}xe^{-\lambda_{ij}}\dfrac{(\lambda_{ij})^{x}}{x!}\mid \lambda_{ij} \in [\tilde{Y}_{ij}]_{h} \}\\
& =\{  \lambda_{ij}\mid \lambda_{ij} \in [\tilde{Y}_{ij}]_{h} \}\\
& = \tilde{\lambda}_{ij},
\end{align*}
where $\mathbb{E}_{F}(\cdot)$ is the fuzzy mean operator. So the fuzzy mean is just the fuzzification of the crisp mean.\par
Suppose that $\tilde{\lambda}_{ij}=(\lambda_{ij}^{L}, \lambda_{ij}^{c}, \lambda_{ij}^{R})$. Denote $\ln (\tilde{\lambda}_{ij})  = \tilde{\nu}_{ij}$.\\
Then our hybrid model built from the log-Poisson regression can be defined as follows
\begin{equation}\label{71}
\tilde{\nu}_{ij}=\tilde{\tau} + \tilde{\alpha}_{i} + \tilde{\beta}_{j}, \; \forall (i,j)\in  \mathcal{T}_{k}, \; \text{and}\; \tilde{\alpha}_{1}=\tilde{\beta}_{1}=0_{\mathbb{R}^{3}}
\end{equation}
where
\begin{align*}
\tilde{\nu}_{ij} & =(\nu^{L}_{ij}, \nu^{c}_{ij}, \nu^{R}_{ij}) \in \mathbb{R}^{3}\\
\tilde{\tau} & = (\tau^{L}, \tau^{c}, \tau^{R})  \in \mathbb{R}^{3}\\
\tilde{\alpha}_{i} & = (\alpha^{L}_{i}, \alpha^{c}_{i}, \alpha^{R}_{i})\in \mathbb{R}^{3}\\
\tilde{\beta}_{j} & =(\beta^{L}_{j}, \beta^{c}_{j}, \beta^{R}_{j}) \in \mathbb{R}^{3}
\end{align*}
are ATFN and\\
\begin{equation}
\left\{ \begin{array}{cl}
\nu^{L}_{ij} & = \tau^{L}+\alpha^{L}_{i}+\beta^{L}_{j}\\
\nu^{c}_{ij} & = \tau^{c}+\alpha^{c}_{i}+\beta^{c}_{j}\\
\nu^{R}_{ij} & = \tau^{R}+\alpha^{R}_{i}+\beta^{R}_{j}
\end{array}\right. 
\end{equation}
\subsection{Modeling Methodology Setup}
The steps for implementing the new model can be carried out as follows
\begin{description}
\item[\textbf{\textit{1. Modeling of the centres of fuzzy parameters.}}]
We apply the log-Poisson Regression to the incremental payments $Y_{ij}$ of Table \ref{tab1}, i.e the Ordinary Least Squared (OLS) regression on $\nu_{ij}$ and the model is
\begin{equation}\label{74}
\nu_{ij}=\tau + \alpha_{i} + \beta_{j} + \epsilon_{ij}
\end{equation}
where $\epsilon_{ij}$ are zero mean errors terms and are assumed to be uncorrelated.
\item[\textbf{\textit{2. Estimation and optimization of fuzzy parameters of the model.}}]
We consider the parameters of the model \eqref{73} as FN and the model become as in \eqref{74}. The centres of fuzzy coefficients of the model \eqref{71} are estimated in step \textbf{1.}\\
Let us assume that these estimates are $\hat{\tau}^{c}, \hat{\alpha}^{c}_{i}$ and $\hat{\beta}^{c}_{j}$.\par
Then
\begin{equation}
\hat{\nu}_{ij}^{c}=\hat{\tau}^{c} + \hat{\alpha}^{c}_{i} + \hat{\beta}^{c}_{j}.
\end{equation}
To estimate the parameters $\tau^{L}, \tau^{R}, \alpha_{i}^{L}, \alpha_{i}^{R}, \beta_{j}^{L}$ and $\beta_{j}^{R}, i \in \{2,3,\ldots , k\}; \; j \in \{2,3,\ldots, k-i \}$ in \eqref{71}, we present the problem as a linear programming problem. This linear programming problem can be written as following :
\begin{equation}
\left\{ \begin{array}{cl}
\min : & z=  \sum\limits_{i=1}^{k}\sum\limits_{j=1}^{k-i}(\nu_{ij}^{R} - \nu_{ij}^{L})\\
& \\
\text{s/t} & \left\{ \begin{array}{cl}
& h^{\ast}\cdot\hat{\nu}_{ij}^{c} - (1-h^{\ast})\cdot\nu_{ij}^{L} \leqslant \ln (Y_{ij})\\
& \\
& h^{\ast}\cdot\hat{\nu}_{ij}^{c} + (1-h^{\ast})\cdot\nu_{ij}^{R} \geqslant \ln (Y_{ij})\\
&\\
& \nu_{ij}^{L}, \nu_{ij}^{R} \in \mathbb{R},\; i=1,2,\ldots, k; j=1,2,\ldots, k-i
\end{array}\right.
\end{array}\right.
\end{equation}\label{90}
where $h^{\ast}$ is the optimized value of $h$ according to Chen et al. (2016) and $\alpha_{1}^{L}=\alpha_{1}^{c}=\alpha_{1}^{R}=0$ and $\beta_{1}^{L}=\beta_{1}^{c}=\beta_{1}^{R}=0$.
\item[\textbf{\textit{3. Defuzzification of fuzzy parameters and Prediction of incremental payments.}}]
At this step, we predict the incremental claims payments of the non fill part of the run-off triangle \ref{tab1} using model \eqref{71} estimated at the step \textbf{2.} Denote$\tilde{Y}_{ij}^{Est}$ that predicted claim payment. Then we have $\tilde{Y}_{ij}^{Est}=e^{\tilde{\nu}_{ij}} $ for $i=1, \ldots, k$ and $j\geqslant k-i+1$.\par
The $h-$level of $\tilde{\nu}_{ij}$ is defined as following :
\begin{align}
& \tilde{\nu}_{ij}(h)=\big[ h\cdot\nu_{ij}^{c} - (1-h)\cdot\nu_{ij}^{L}, h\nu_{ij}^{c} + (1-h)\cdot\nu_{ij}^{R} \big]\\
& \tilde{Y}_{ij}^{Est}(h)= \big[e^{h\cdot\nu_{ij}^{c} - (1-h)\cdot\nu_{ij}^{L}}, e^{h\cdot\nu_{ij}^{c} + (1-h)\cdot\nu_{ij}^{R}}\big].
\end{align}
We use now the concept of of weighted function (Definition \ref{def1}) applied on $\tilde{Y}_{ij}^{Est}$ in order to compute the crisp value $Y_{ij}^{Est}$ of $\tilde{Y}_{ij}^{Est}$ rather than the expected value of FN (de Campos Ib\'{a}\~{n}ez
and Mu\~{n}oz, 1989) used in de Andr\'{e}s S\'{a}nchez (2006) which have to take into account an arbitrary parameter $\beta \in [0,1]$ which is the decision-maker risk aversion.
\begin{equation}
Y_{ij}^{Est} = \dfrac{\frac{1}{2}\bigg[ \int_{0}^{1}e^{h\cdot\nu_{ij}^{c} - (1-h)\cdot\nu_{ij}^{L}}hdh + \int_{0}^{1}e^{h\cdot\nu_{ij}^{c} + (1-h)\cdot\nu_{ij}^{R}}hdh \bigg]}{\int_{0}^{1}hdh}
\end{equation}
But $2\int_{0}^{1} hdh = [h^2]_{0}^{1} =1$.\\
and
\begin{align}
\int_{0}^{1}e^{h\cdot\nu_{ij}^{c} - (1-h)\cdot\nu_{ij}^{L}}hdh & = \bigg[ \dfrac{h}{\nu_{ij}^{c} + \nu_{ij}^{L}}e^{-\nu_{ij}^{L} + h(\nu_{ij}^{c} + \nu_{ij}^{L})}\bigg]_{0}^{1} - \dfrac{1}{\nu_{ij}^{c} + \nu_{ij}^{L}}\int_{0}^{1}e^{-\nu_{ij}^{L} + h(\nu_{ij}^{c} + \nu_{ij}^{L})}dh; \;  \nu_{ij}^{c} + \nu_{ij}^{L} \neq 0\\
& = \dfrac{e^{\nu_{ij}^{c}}}{\nu_{ij}^{c} + \nu_{ij}^{L}} - \bigg[\dfrac{e^{-\nu_{ij}^{L} + h(\nu_{ij}^{c} + \nu_{ij}^{L})}}{(\nu_{ij}^{c} + \nu_{ij}^{L})^2} \bigg]_{0}^{1};\; \nu_{ij}^{L} + \nu_{ij}^{c} \neq 0\\
& = \dfrac{e^{\nu_{ij}^{c}}}{\nu_{ij}^{c} + \nu_{ij}^{L}} - \bigg(\dfrac{e^{\nu_{ij}^{c}}}{(\nu_{ij}^{c} + \nu_{ij}^{L})^2} - \dfrac{e^{ - \nu_{ij}^{L}}}{(\nu_{ij}^{c} + \nu_{ij}^{L})^2} \bigg);\; \nu_{ij}^{c} + \nu_{ij}^{L} \neq 0\\
& = \dfrac{(\nu_{ij}^{c} + \nu_{ij}^{L})e^{\nu_{ij}^{c}} - e^{\nu_{ij}^{c}} + e^{-\nu_{ij}^{L}}}{(\nu_{ij}^{c} + \nu_{ij}^{L})^2},\; \nu_{ij}^{c} + \nu_{ij}^{L} \neq 0.
\end{align}
Likewise
\begin{equation}
\int_{0}^{1}e^{h\nu_{ij}^{c} + (1-h)\nu_{ij}^{R}}hdh=\dfrac{e^{\nu_{ij}^{c}}(\nu_{ij}^{c} - \nu_{ij}^{R} - 1) + e^{\nu_{ij}^{R}}}{(\nu_{ij}^{c} - \nu_{ij}^{R})^2},\; \nu_{ij}^{c} - \nu_{ij}^{R} \neq 0.
\end{equation}
Hence
\begin{equation}
Y_{ij}^{Est} =  \dfrac{(\nu_{ij}^{c} + \nu_{ij}^{L} - 1)e^{\nu_{ij}^{c}} + e^{-\nu_{ij}^{L}}}{(\nu_{ij}^{c} + \nu_{ij}^{L})^2} + \dfrac{e^{\nu_{ij}^{c}}(\nu_{ij}^{c} - \nu_{ij}^{R} - 1) + e^{\nu_{ij}^{R}}}{(\nu_{ij}^{c} - \nu_{ij}^{R})^2},\; \nu_{ij}^{c} + \nu_{ij}^{L} \neq 0\; \nu_{ij}^{c} - \nu_{ij}^{R} \neq 0.
\end{equation}\label{77}
\item[\textbf{\textit{4. Estimation of the outstanding reserve.}}]
The loss reserve corresponding to the $i^{th}$ accident year is 
\begin{equation}
R_{i}^{Est}=\sum\limits_{j=k-i+1}^{k}Y_{ij}^{Est}
\end{equation}
and the outstanding reserve is 
\begin{equation}
R^{Est}=\sum\limits_{i=1}^{k}\sum\limits_{j=k-i+1}^{k}Y_{ij}^{Est}
\end{equation}
\end{description}
\subsection{Implementation of the Method on Data}
We apply our model \eqref{71} on the run-off triangle from Taylor and Ashe (1983) (see \ref{tab2}). We first fit the log-Poisson regression on this data to estimate the centres ($\tau^{c}, \alpha_{i}^{c}, \beta_{j}^{c}$) of the fuzzy parameters ($\tilde{\tau}, \tilde{\alpha}_{i}, \tilde{\beta}_{j}$). The results is presented in Table \ref{tab3}. Now we use the linear programming problem in \eqref{90} to estimate the left and right spreads ($\tau^{L},\alpha_{i}^{L}, \beta_{j}^{L}, \tau^{R},\alpha_{i}^{R}, \beta_{j}^{R}$). Using Tables \ref{tab2}-\ref{tab3} and equation \eqref{90}, we estimate the fuzzy parameters of the new model with \texttt{Matlab.} After we use the concept of weighted function to compute the crisp value $Y_{ij}^{Est}$(equation \eqref{77}). The results displayed in Table \ref{tab8} and \ref{tab9}.\\
\begin{table}[!h]
	\centering
	\begin{tabular}{|c|c|}
		\hline
		 $(\tilde{\alpha}_{i}^{Est})_{2 \leqslant i \leqslant 10}$ & $(\tilde{\beta}_{j}^{Est})_{2 \leqslant j \leqslant 10}$ \\ \hline \hline
		 $(0.3310,0.3312, 0.3313)$ & $(0.9125 ,0.9125, 0.9125)$ \\
		 $(0.3211,0.3211, 0.3211)$ & $(0.9588 ,0.9588, 0.9588)$ \\
		 $(0.3059,0.3059,0.3060)$ & $(1.0259,1.0259,1.1864)$ \\
		 $(0.2193,0.2193,0.2193)$ & $(0.4352,0.4352,0.4353)$\\
		 $(0.2700,0.2700,0.2701)$ & $(0.0800,0.0800,0.5243)$ \\
		 $(0.3722,0.3722,0.3722)$ & $(-0.0063,-0.0063,0.0656)$ \\ 
		 $(0.5533,0.5533,0.5533)$ & $(-0.3945,-0.3944,-0.3944)$ \\ 
		 $(0.3689,0.3689,0.3689)$ & $(0.0093,0.0093,0.0094)$ \\ 
		 $(0.240,0.240,0.2420)$ & $(-1.3799,-1.3799,-0.0760)$ \\ \hline
		\multicolumn{2}{|c|}{ $ \tilde{\tau}^{Est} = (12.506,12.506,12.8244);\quad h^{\star} = 0.115$} \\ \hline
	\end{tabular}
	\caption{Estimation of parameters of our models}\label{tab8}
\end{table}
\begin{table}[!h]
	\centering
	\begin{tabular}{|c|c|c|c|c|c|c|c|c|c|c|c|}
		\cline{3-12}
		\multicolumn{2}{c|}{} &	\multicolumn{10}{|c|}{\centering Development Year}\\ \hline
		\multicolumn{2}{|c|}{$i/j$} &  1 &  2 &  3 &  4 &  5 &  6 & 7 & 8 & 9 & 10 \\ \hline
		\multirow{12}{6mm}{\rotatebox{90}{ Origin Year}} & 1  & 160879.889 & 398991.237 & 417815.351 & 471378.428 & 248094.604 & 202870.489 & 163724.560 & 108664.307 & 162380.941 & 65947.415 \\ 
		& 2 & 223697.783 & 554890.532 & 581075.012 & 655624.053 & 345000.220 & 282153.114 & 227661.516 & 151079.328 & 225785.445 & 91732.722 \\ 
		& 3 & 222198.647 & 549318.932 & 575240.344 & 649038.546 & 341537.092 & 279318.881 & 225376.552 & 149563.648 & 223519.632 & 90810.394 \\ 
		& 4 & 218135.115 & 541084.422 & 566617.042 & 639307.149 & 336418.736 & 275131.901 & 212603.384 & 147323.497 & 220170.827 & 89449.093 \\ 
		& 5 & 200115.161 & 496361.251 & 519782.296 & 586450.621 & 308619.792 & 252387.076 & 203658.495 & 135156.565 & 201982.590 & 82051.110 \\ 
		& 6 & 210478.713 & 522082.044 & 546717.470 & 616849.114 & 324607.333 & 265468.126 & 214206.815 & 142153.980 & 212442.923 & 86306.013 \\ 
		& 7 & 233007.431 & 577996.713 & 605272.178 & 682931.930 & 359362.252 & 293903.243 & 237137.054 & 157365.016 & 212419.327 & 95553.951 \\ 
		& 8 & 279034.306 & 692239.443 & 724909.289 & 817955.617 & 430369.905 & 352003.602 & 283985.052 & 188440.664 & 281638.791 & 114451.989 \\ 
		& 9 & 232243.357 & 576100.293 & 603286.215 & 680690.598 & 358183.506 & 292938.796 & 236359.356 & 156849.131 & 234410.827 & 95240.262 \\ 
		& 10 & 204415.080 & 507034.083 & 530959.088 & 599076.133 & 315253.479 & 257828.539 & 208036.859 & 138059.710 & 206322.671 & 83830.125 \\ \hline
	\end{tabular}
	\caption{Prediction of $Y_{ij} (Y_{ij}^{Est})$ from our model}\label{tab9}
\end{table}	
From table \ref{tab8}, the parameter $\tau$ of the original model (\eqref{62}) is the most affected by the fuzziness. Some other parameters remains crisp, i.e after estimation of fuzzy parameters, those parameters are not affected by fuzziness. In that case, the centre, left and right spreads are equal.\\
Table \ref{tab9} have been computed using the estimations of fuzzy parameters of our model and the weighted function of FN.\par
\textbf{Remark :} The optimized values of fuzzy parameters of model \eqref{73} could be ATFN or STFN. That why in the linear programming problem \eqref{90} we did not specified a restriction on the fuzzy parameters. \par
From Table \ref{tab9}, we compute the outstanding reserve, i.e
\begin{equation}
R^{Est}=\sum\limits_{i=1}^{10}\sum\limits_{j=10-i+1}^{10}Y_{ij}^{Est}=16,735,378.64
\end{equation}	
The Mean Square Error, the Standard Deviation and the Error reserve prediction of the outstanding reserve could be computed using the bootstrap method, i.e 
\begin{enumerate}
\item[\textbf{(a)}]
We calculate the outcomes of the Pearson residuals from $Y_{ij}$ of table \ref{tab2} and $Y_{ij}^{Est}$ of our model.
\item[\textbf{(b)}]
We resample from the adjusted residuals, with replacement.
\item[\textbf{(c)}]
We use this set of residuals and the estimated values $Y_{ij}^{Est}$ to create a new suitable pseudo-history.
\item[\textbf{(d)}]
We compute the fitted values, and use the sum of the future part as an estimate of the reserve to be held.
\item[\textbf{(e)}]
At the end of the loop, store the simulated total payments and the estimated reserve to be held.
\end{enumerate}
The results are displayed in table \ref{tab10}.\\
\begin{table}[!h]
	\centering
	\begin{tabular}{|c|c|c|}
		\hline
		Reserve Prediction Error ($EP_{R^{Est}}$) & 2,808,982 \\ \hline
		Reserve Standard Deviation ($\sigma_{R^{Est}}$)	& 2,081,045.17969647 \\ \hline
		Mean squared error ($MSE_{R^{Est}}$) & 3,721,611\\ \hline
	\end{tabular}
	\caption{Variability of $R^{Est}$}\label{tab10}
\end{table}\\
According to Reserve Prediction Error and Reserve Standard Deviation criteria (see table \ref{tab10}), our hybrid model produce good results than the classical log-Poisson Regression since we get a Reserve Prediction Error and a Reserve Standard Deviation smaller that those computed from the log-Poisson Regression. But the Reserves Mean Squared error in our model is greater than the the one in the classical log-Poisson Regression. That means, the estimator of Reserve in the Hybrid Model is more biased that the one from the classical model.

\section{Conclusion} 
This paper has considered the relevance of Hybrid Models in loss reserving framework, mainly when we are in presence of vague information like in medical insurance (Straub and Swiss, 1988). Those models could give best result compared to stochastic models. In this article, the Hybrid Model that has been suggested is in GLM framework (eg Log-Poisson Regression). The optimize $h-$ value (Chen
et al., 2016) has been taken into account in the linear programming problem. Bootstrap procedure has been used to compute the Reserve Prediction Error, the Reserve Standard Deviation and the Mean Square Error. From a numerical application, it has been shown that the model produce good results of reserve than the classical log-Poisson regression according to Reserve Prediction Error and Standard Deviation.\par     
In brief, the Hybrid Model developed in this paper produce best results than the classical log-Poisson regression according to Error prediction and Standard Deviation criteria but the mean square error criterion is not satisfied. This is a weakness of our model. The hard computation aspect of our hybrid model constitute another weakness compare to the simplest computation of a classical log-Poisson regression. 
\section*{Acknowledgements}
This work was supported by African Union through Pan African University.\newpage

\textbf{{\large References}}\\

Apaydin, A. and Baser, F. (2010). Hybrid fuzzy least-squares regression analysis in claims reserving with geometric separation method. Insurance: Mathematics and Economics, 47(2):113-122.\par
Asai, H. T.-S. U.-K. (1982). Linear regression analysis with fuzzy model. IEEE Trans. Systems Man Cybern, 12:903-907.\par
Baser, F. and Apaydin, A. (2010). Calculating insurance claim reserves with hybrid fuzzy least squares regression analysis. Gazi University Journal of Science, 23(2):163-170.\par
Bornhuetter, R. L. and Ferguson, R. E. (1972). The actuary and ibnr. In Proceedings of the casualty actuarial society, volume 59, pages 181-195.\par
Buckley, J. J. (2006). Fuzzy probability and statistics, volume 196. Springer Science \& Business Media.\par
Chang, Y.-H. O. (2001). Hybrid fuzzy least-squares regression analysis and its reliability measures. Fuzzy Sets and Systems, 119(2):225-246.\par
Chen, F., Chen, Y., Zhou, J., and Liu, Y. (2016). Optimizing h value for fuzzy linear regression with asymmetric triangular fuzzy coefficients. Engineering Applications of Artificial Intelligence, 47:16-24.\par
de Andr\'{e}s S\'{a}nchez, J. (2006). Calculating insurance claim reserves with fuzzy regression. Fuzzy sets and systems, 157(23):3091-3108.\par
de Andr\'{e}s S\'{a}nchez, J. (2007). Claim reserving with fuzzy regression and taylors geometric separation method. Insurance: Mathematics and Economics, 40(1):145-163.\par
de Andr\'{e}s S\'{a}nchez, J. (2012). Claim reserving with fuzzy regression and the two ways of anova. Applied Soft Computing, 12(8):2435-2441.\par
de Andr\'{e}s S\'{a}nchez, J. (2014). Fuzzy claim reserving in non-life insurance. Comput. Sci. Inf. Syst., 11(2):825-838.\par
de Campos Ib\'{a}\~{n}ez, L. M. and Mu\~{n}oz, A. G. (1989). A subjective approach for ranking fuzzy
numbers. Fuzzy sets and systems, 29(2):145-153.\par
Dubois, D. and Prade, H. (1978). Operations on fuzzy numbers. International Journal of systems science, 9(6):613-626.\par
Dubois, D. and Prade, H. (1988). Fuzzy numbers : An overview. In Analysis of Fuzzy Information, pages 3-39. J. C Bezdek.\par
England, P. D. and Verrall, R. J. (2002). Stochastic claims reserving in general insurance. British Actuarial Journal, 8(03):443-518.\par
Ishibuchi, H. and Nii, M. (2001). Fuzzy regression using asymmetric fuzzy coefficients and fuzzified neural networks. Fuzzy Sets and Systems, 119(2):273-290.\par
Kaas, R., Goovaerts, M., Dhaene, J., and Denuit, M. (2008). Modern actuarial risk theory: using R, volume 128. Springer Science \& Business Media.\par
Kauffman, A. and Gupta, M. M. (1991). Introduction to fuzzy arithmetic, theory and application.\par
Kaufmann, A. and Gupta, M. M. (1991). Introduction to Fuzzy Arithmetic. Van Nostrand Reinhold Company.\par
Lai, Y.-J. and Hwang, C.-L. (1992). Fuzzy mathematical programming. In Fuzzy Mathematical
Programming, pages 74-186. Springer.\par
Linnemann, P. (1984). van eeghen j. (1981): Loss reserving methods. surveys of actuarial studies no. 1. nationale-nederlanden n.v., rotterdam. 114 pages. ASTIN Bulletin: The Journal of the
International Actuarial Association, 14(01):87-88.\par
Mack, T. (1991). A simple parametric model for rating automobile insurance or estimating ibnr claims reserves. Astin bulletin, 21(01):93-109.\par
Moskowitz, H. and Kim, K. (1993). On assessing the h value in fuzzy linear regression. Fuzzy sets and systems, 58(3):303-327.\par
Renshaw, A. E. and Verrall, R. J. (1998). A stochastic model underlying the chain-ladder technique. British Actuarial Journal, 4(04):903-923.\par
Straub, E. and Swiss, A. A. (1988). Non-life insurance mathematics. Springer.
Taylor, G. (1986). Claims reserving in non-life insurance. Insurance series. North-Holland.\par
Taylor, G., McGuire, G., and Greenfield, A. (2003). Loss reserving: past, present and future.
Taylor, G. C. and Ashe, F. (1983). Second moments of estimates of outstanding claims. Journal of Econometrics, 23(1):37-61.\par
W\"{u}thrich, M. V. and Merz, M. (2008). Stochastic claims reserving methods in insurance, volume 435. John Wiley \& Sons.\par
Yager, R. R. and Filev, D. (1999). On ranking fuzzy numbers using valuations. International Journal of Intelligent Systems, 14(12):1249-1268.\par
Zadeh, L. A. (1965). Fuzzy sets. Information and control, 8(3):338-353.\par
Zadeh, L. A. (1975b). The concept of a linguistic variable and its application to approximate reasoningi. Information sciences, 8(3):199-249.\par
Zadeh, L. A. (1975c). The concept of a linguistic variable and its application to approximate reasoningii. Information sciences, 8(4):301-357.\par
Zadeh, L. A. (1975d). The concept of a linguistic variable and its application to approximate reasoning-iii. Information sciences, 9(1):43-80.\par
Zadeh, L. A. and al. (1975a). Calculus of fuzzy restrictions. Electronics Research Laboratory, University of California.
\end{document}